\documentclass[12pt]{article}

\usepackage{booktabs}
\usepackage{graphicx}%
\usepackage{multirow}%
\usepackage{amsmath,amssymb,amsfonts}%
\usepackage{amsthm}%
\usepackage{mathrsfs}%
\usepackage[title]{appendix}%
\usepackage{xcolor}%
\usepackage{textcomp}%
\usepackage{manyfoot}%
\usepackage{booktabs}%
\usepackage{algorithm}%
\usepackage{algorithmicx}%
\usepackage{algpseudocode}%
\usepackage{listings}%
\usepackage{url}%
\usepackage[margin=1in]{geometry}
\usepackage[width=0.8\textwidth]{caption}

\newtheorem{lemma}{Lemma}
\newtheorem{theorem}{Theorem}
\newtheorem{comment}{Comment}

\begin{document}

\title{Proof-of-Stake Dynamics: The Elusive Price Anchor and Endogenous Volatility Harvesting}

\author{
    Mikhail Perepelitsa \\
    University of Houston \\
    \small 3551 Cullen Blvd, PGH \\
    \small Houston, TX 77204-3008, USA
}

\date{}

\maketitle

\abstract{
In this paper, we develop an open-economy macroeconomic model of a Proof-of-Stake network to analyze nominal token-price dynamics and the systemic effects of speculative capital.

We first consider a network populated solely by active utility users, who finance network activity through a steady exogenous inflow of fiat currency. We prove the existence of a unique, globally asymptotically stable steady-state equilibrium with a well-defined nominal token price and derive a closed-form expression for the network's relaxation time. Calibrating the model using parameters representative of the current Ethereum network, we estimate a relaxation half-life of approximately 46 years. This extreme macroeconomic inertia implies that the token price may remain persistently displaced from its evolving steady-state benchmark, producing sustained price overshooting as the network adjusts to changing fundamentals.

We then introduce an Investor class to examine the effects of passive and active speculative capital. We show that passive institutional staking compresses the native staking yield and creates a structural imbalance that systematically raises the nominal token price while shifting consensus ownership away from active utility users. Active speculative capital has a qualitatively different effect. In response to capital shocks, the Consumer class's rigid preference for fiat-denominated consumption generates an endogenous constant-value strategy. This mechanism shifts staked-token ownership from the Investor class toward active utility users, with potentially favorable implications for consensus decentralization.
}


\section{Introduction}\label{sec:intro}

This paper develops an explanatory open-economy macroeconomic model to address two fundamental questions concerning Proof-of-Stake (PoS) networks. Contemporary macroeconomic discussions of Ethereum (ETH) are increasingly shaped by the network's rapid financialization. The expansion of liquid-staking protocols and the introduction of yield-bearing exchange-traded products have attracted substantial institutional capital, encouraging investors to treat ETH as a passive yield-bearing asset rather than primarily as a transactional utility token.

This structural transformation raises two central questions. First, is the absence of an observable macroeconomic equilibrium token price an intrinsic feature of the network, or is it a consequence of institutional speculation? Second, how does speculative capital affect the network's economic functioning, token distribution and issuance, and consensus security?

To answer the first question, we start with the model, called the Dual-Consumption Model of the economy of Proof-of-Stake networks, introduced in \cite{Perepelitsa2026_II} and strip away the speculative layer contained in it to isolate the network's intrinsic economic drivers. We consider a market populated initially only by Consumers: active utility users with well-defined preferences who periodically inject exogenous, real-world fiat to finance network transactions and staking. We analyze the behavior of the dynamical system at the center of this model for the existence and stability of the steady-states which represent the economic macroscopic equilibria.

To address the second question, we extend the model by adding a financial agent (the Investor) who passively keeps capital in the network and subjects the network to buy/sell shocks, allowing us to quantify the effects of these shocks on the system.

Our main findings are as follows.

\noindent\textit{The Price Anchor and Macroeconomic Inertia.}
We first prove that the Consumer-only economy possesses a unique, globally asymptotically stable steady-state equilibrium. At this equilibrium, the network satisfies two mass-balance conditions: newly issued tokens equal tokens burned, while external fiat inflows equal fiat liquidations. The resulting equilibrium nominal price therefore provides a well-defined long-run price anchor.

Our quantitative analysis, however, reveals substantial macroeconomic inertia. Under a calibration based on current Ethereum network parameters, the estimated relaxation half-life is approximately 50 years. Because economically relevant shocks and changes in network fundamentals occur on much shorter time scales, the system is unlikely to approach its prevailing steady state before that steady state shifts again. The network may therefore remain in persistent transition toward a moving equilibrium benchmark.

This mechanism is analogous to the overshooting dynamics of classical open-economy models \cite{Dornbusch1976,ObstfeldRogoff1996}. In both settings, the interaction between fast and slow variables generates prolonged departures from equilibrium. In our model, the nominal token price is the fast variable, responding immediately to changes in external conditions, whereas the physical token stock adjusts gradually through issuance, burning, and portfolio reallocation. Thus, persistent nominal-price disequilibrium can arise as an intrinsic consequence of the network's adjustment dynamics, even in the absence of speculative capital.

\noindent\textit{The Centralizing Effect of Passive Capital.}
In the second part of the paper, we examine the systemic effect of passive institutional staking. We prove that when a large Investor class holds and stakes tokens without contributing to transactional demand, its participation compresses the endogenous staking yield below the Consumer-only equilibrium level. The resulting imbalance causes the nominal token price to rise monotonically while the Consumer's share of the staked token stock declines monotonically.

Passive capital therefore moves the system away from the Consumer-only steady state and gradually reallocates consensus ownership from active utility users toward non-transactional Investors. This result identifies a centralizing macroeconomic channel that operates independently of any direct technological advantage enjoyed by institutional staking providers.

\noindent\textit{Endogenous Volatility Harvesting.}
We next use numerical simulations to examine the propagation of speculative capital shocks, modeled as periodic inflows into and outflows from network staking. We observe that the Consumer’s fixed propensity to consume in fiat generates an automated constant-value strategy: it sells tokens to Investors when speculative inflows raise the token price and repurchases them when subsequent outflows depress it. This response creates a Shannon’s-demon-like volatility-harvesting mechanism. Under a symmetric fiat cycle and frictionless market clearing, the Consumer’s accumulated fiat-denominated wealth remains unchanged, while physical tokens are systematically transferred from the Investor to the Consumer. Speculative volatility is therefore harvested in token units rather than in fiat.

Active speculative capital thus has an effect fundamentally different from passive staking. Although it amplifies short-run price volatility, it shifts staked-token ownership toward active utility users, with potentially favorable implications for consensus decentralization.

\subsection{Literature Overview}\label{subsec:lit_review}

The open-economy macroeconomic framework developed in this paper builds on Jermann's models of Ethereum under Proof-of-Stake and EIP-1559 \cite{jermann2025macro,jermann2025optimal}, which characterize its long-run equilibrium and connect token valuation, staking, issuance, and network activity. Related models derive token value from transactional demand and platform adoption \cite{cong2021tokenomics} and examine the interaction between utility users and speculative traders \cite{sockin2023model}. Our framework extends this literature by introducing distinct Consumer and Investor classes, explicit external fiat flows, and large deterministic capital shocks.

More specifically, we replace the homogeneous-agent and quasi-linear utility specifications with a nested utility structure that generates a nonzero wealth effect and a fixed marginal propensity to consume in fiat. This directly links nominal token-price dynamics to external capital flows. We also shift the focus from small stochastic disturbances to speculative purchase and liquidation cycles, allowing us to identify endogenous stabilizing and redistributive mechanisms. The underlying Consumer--Investor structure builds on the Dual-Consumption model introduced in \cite{Perepelitsa2026_II}; the present paper establishes the stability and relaxation properties of the Consumer-only equilibrium and distinguishes the effects of passive staking from those of active speculative capital.

Our analysis also relates to research on equilibrium staking and consensus ownership. Existing studies examine how heterogeneous investment horizons and staking rewards affect participation and redistribution \cite{john2021equilibrium}, as well as whether Proof-of-Stake reward mechanisms concentrate token holdings \cite{rosu2021evolution,fanti2019compounding}. Empirical work further documents the growing concentration of Ethereum's validating power among centralized and liquid-staking services \cite{grandjean2023ethereum}. Our model identifies a complementary mechanism through which external yield-seeking capital compresses staking yields and shifts consensus ownership away from transactional users.

Finally, Elowsson \cite{elowsson2021circulating} characterizes Ethereum's steady-state token supply by equating validator issuance with EIP-1559 token burn, but does not endogenize the nominal token price through behavioral agents and fiat flows. Our analysis of price adjustment instead draws an analogy with classical open-economy overshooting models \cite{Dornbusch1976,ObstfeldRogoff1996}: the nominal token price responds rapidly to capital shocks, whereas the staked-token stock adjusts slowly, producing persistent deviations from a moving steady-state benchmark.

\section{The Dual-Consumption Model}
\label{sec:utility_model}

In the Dual-Consumption Model,  the Consumer class represents the active utility users of the network. Unlike speculative actors, Consumers possess well-defined preferences for transacting on the network, staking tokens, and liquidating a portion of their holdings for outside-world consumption. Each period, they introduce a predictable, exogenous inflow of fiat ($I_c$) derived from their outside-world income to finance their ongoing network activities. The Consumer's economic behavior is rigidly governed by a hierarchical utility maximization framework. 

The first-level utility function $U_1(C,V)$ utilizes a standard logarithmic Cobb-Douglas specification to model the trade-off between immediate real-world fiat consumption $C$ and the aggregate capital volume $V$ retained within the digital asset ecosystem. This formulation yields fixed expenditure shares, parameterizing $\nu$ as the consumer's marginal propensity to consume out of total crypto-inclusive wealth $W_c$, and $\xi = 1-\nu$ as the crypto retention rate. Conditional on the top-level macro allocation $V$, the second-level utility function $U_2(S_c, L_c)$ governs the network portfolio optimization, splitting the consumer's crypto-fiat allocation between yield-bearing physical staked supply $S_c$ and active gas transaction utility $L_c$. We implement a quasi-linear utility structure at this second tier to capture an asymmetry in user behavior: while the demand for staking assets $S_c$ is highly elastic and scales linearly with wealth allocations, the demand for core transactional throughput $\gamma \log L_c$ is inelastic. In this formula, $\gamma>0$ measures the intensity of the Consumer’s demand for transactional token use.

Thus, the first-level utility is given by the formula

\begin{equation}
    U_1(C,V) = \nu\log C + (1-\nu)\log V,\quad \nu\in(0,1),
\end{equation}
with the budget constraint 
\[ 
C+V = W_c,
\]
and the second-level utility:
\begin{equation}
    U_2(S_c,L_c) = E[p]S_c(1+y) + \gamma\log L_c,
\end{equation}
with the budget
\[
pS_c + pL_c = V.
\]
Here, $E[p]$ is the next period token price. 
Optimizing the consumption we get:
\[
C= \nu W_c,\quad V = \xi W_c,\quad \xi=1-\nu.
\]
Optimizing the crypto balance, we obtain:
\begin{equation}
\label{def:L_C}
L_c=\frac{\gamma}{E[p](1+y)},
\end{equation}
\begin{equation}
\label{def:S_c}
S_c = V/p- L_c = \frac{\xi W_c}{p}-L_c.
\end{equation}
To simplify the analysis, we assume myopic price expectations, 
$E[p^{t+1}]=p^t.$

In contrast to the utility-driven Consumer, the Investor class represents purely speculative capital. These actors do not participate in the network's endogenous transactional utility, meaning they do not consume block space or incur transaction fees. Instead, they interact with the protocol strictly as a yield-bearing digital bond, passively staking their entire token balance to capture the native issuance. Within the Dual-Consumption framework, the Investor acts as the primary catalyst for macroeconomic volatility by injecting or extracting exogenous capital shocks ($\Lambda^t$) during speculative boom-and-bust cycles.

We summarize the state variables and parameters for the open-economy model that we use in this paper in Table \ref{tab:notation}.

\begin{table}[htbp]
    \centering
    \begin{tabular}{@{}l p{9.5cm}@{}}
        \toprule
        \textbf{Symbol} & \textbf{Description} \\
        \midrule
        $M$ & Total active physical supply of the native token (e.g., ETH). \\
        $S$ & Total physical tokens staked in the network. \\
        $p$ & Fiat-denominated unit price of the native token (e.g., USD price of ETH). \\
        $y$ & Native staking yield, defined as $y = \frac{c}{\sqrt{S}}$, where $c$ is the network issuance parameter. \\
        $W_c$ & Fiat-denominated budget of the Consumer. \\
        $I_c$ & Rate of Consumer fiat in-flow.\\
        $S_c$ & Physical tokens demanded for staking by the Consumer. \\
        $L_c$ & The physical quantity of tokens expended on network transactions during the period. \\
        $S_i$ & Physical tokens demanded for staking by the Investor. \\
        $\Lambda^t$ & Rate of Investor fiat in/out flow.\\
                \bottomrule
        \end{tabular}
    \caption{State variables and parameters for the Dual-Consumption Model.}
    \label{tab:notation}%
\end{table}


\subsubsection{Network supply evolution $M^{t+1}$}\label{subsubsec:supply2}

We denote by $M^t$ the active token stock represented in the model, comprising staked tokens and tokens allocated to transactional expenditure. Passive unstaked token balances that are neither staked nor allocated to current transactional expenditure are outside the modeled active stock.

The total active supply of ETH at period $t$ is split between tokens staked and tokens designated for burn:
\[ 
M^t = S^t + L^t_c.
\]
The number of tokens changes due to the network inflation and burn:
\[
M^{t+1} = S^t(1+y^t).
\]
The new amount of tokens staked is determined from equation
\[
M^{t+1} = S^{t+1} + L^{t+1}_c,
\]
where $L^{t+1}_c$ is the new demand for gas by the consumers. And finally,
\begin{equation}
\label{eq:S_t+1 II}
S^{t+1} = S^t(1+y^t) - L^{t+1}_c.
\end{equation}

\subsubsection{The Consumer and the Investor staking evolution}
\label{subsubsec:price2}

The total network staking is composed of the Investor and the Consumer staking amounts:
\[
S^t {}={} S^t_i + S^t_c,\quad t=0,1,2...
\]

We consider an open-loop model in which the Investor stakes all the wealth that exogenously changes with the in-flow rate of capital $\Lambda_t.$

The wealth accumulated by the Investor and the Consumer over one period:
\begin{itemize}
    \item Next Period Investor Staking Amount 
    \begin{equation}
    \label{Investor New Stake}
    S^{t+1}_i = S_i^t(1+y^t){}+{}\frac{\Lambda^{t+1}}{p^{t+1}}.
    \end{equation}
    \item Next Period Consumer Total Wealth (including exogenous fiat income $I_c^{t+1}$): 
    \begin{eqnarray}
    \label{Consumer New Wealth 1}
    W_c^{t+1} &=& p^{t+1} S_c^t (1 + y^t) + I^{t+1}_c\\
     \label{Consumer New Wealth 2}
             & = &p^{t+1} \left( \frac{\xi W_c^t}{p^t} - L_c^t \right) (1 + y^t) + I^{t+1}_c.
    \end{eqnarray}
\end{itemize}

\subsubsection{Clearing Price Equation}

To derive the next period price $p^{t+1}$, we begin with the market-clearing identity. The total token supply available for allocation at date $t+1$
must equal the sum of the Investor's new physical demand and the Consumer's new physical demand:
\begin{equation}
\label{eq:supply_demand_balance}
S^t(1+y^t) = S_i^t(1+y^t) + \frac{\Lambda^{t+1}}{p^{t+1}} + \frac{\xi W_c^{t+1}}{p^{t+1}}.
\end{equation}

Since $S^t(1+y^t) = S_c^t(1+y^t)+S_i^t(1+y^t),$ and using 
equation \eqref{Consumer New Wealth 1}, we get the Balance equation
\begin{equation}
\label{Sell Buy Balance}
(1-\xi)S^t_c(1+y^t) = \frac{\Lambda^{t+1}}{p^{t+1}} + \frac{\xi I_c^{t+1}}{p^{t+1}}, 
\end{equation}
where the left-hand side represents the number of tokens sold by the Consumer class and the right-hand side represents the amounts purchased by the Investor and the Consumer, respectively.

Using \eqref{def:S_c} on the left-hand side of \eqref{Sell Buy Balance}, and solving for $p^{t+1},$ we get,
\begin{equation}
\label{eq:price_clearing 0}
p^{t+1} = p^t \left[ \frac{\Lambda^{t+1} + \xi I_c^{t+1}}{\nu \left( \xi W_c^t - p^t L_c^t \right) (1+y^t)} \right].
\end{equation}
This formula can also be written in terms of the Consumer staking amount $S_c^t:$
\begin{equation}
\label{eq:price_clearing I}
p^{t+1} =  \frac{\Lambda^{t+1} + \xi I_c^{t+1}}{\nu S_c^t (1+y^t)}.
\end{equation}

Finally, we substitute the liquid gas demand constraint $L_c^t = \frac{\gamma}{p^t(1+y^t)}$ into the denominator to reach the fully expanded, actionable price equation:
\begin{equation}
\label{eq:price_clearing II}
p^{t+1} = p^t \left[ \frac{\Lambda^{t+1} + \xi I_c^{t+1}}{\nu \xi W_c^t (1+y^t) - \nu \gamma} \right].
\end{equation}

Substituting \eqref{eq:price_clearing 0} in \eqref{Consumer New Wealth 2} we observe the balance of the fiat inflow (Right) and outflow (Left):
\begin{equation}
    \label{eq:Fiat Balance}
    \nu W^{t+1}_c = \Lambda^{t+1} + I_c^{t+1}.
\end{equation}
Here and below, we restrict $\Lambda^{t}> -\xi I_c^t.$

For future use, we record 
\begin{itemize}
    \item Next Period Consumer Staking Amount
    \begin{align}    
        S^{t+1}_c &= \xi S_c^t(1+y^t) + \frac{\xi I_c^{t+1}}{p^{t+1}} - L^{t+1}_c\\
        \label{New Consumer Staking}
        &= \xi S_c^t(1+y^t) + \frac{1}{p^{t+1}} \left( \xi I_c^{t+1} - \frac{\gamma}{1+y^{t+1}} \right).
    \end{align}
\end{itemize}
\begin{comment}
    Since the Consumer fiat inflow rate $I_c^t$ is vastly larger than the gas burn fees $\frac{\gamma}{1+y^{t+1}}$ under the Ethereum calibration, equation \eqref{New Consumer Staking} shows that there is a lower bound on the rate of decrease of the Consumer staking level:
    \[
    S^{t+1}_c> \xi S_c^t
    \]
    no matter how hard the Investor is buying off tokens, $(\Lambda^t>0)$, preventing a fast takeover of the network by the Investor.
\end{comment}

\subsubsection{The Next-Period Staking Level}

Once the market-clearing price $p^{t+1}$ is established, it determines the required fiat-equivalent gas burn, which dictates the physical supply evolution. The law of motion for the physical staked supply states that the next period's staked tokens equal the newly inflated total supply minus the tokens burned for transactional utility:
\begin{equation}
\label{eq:next_S_evolution}
S^{t+1} = S^t(1+y^t) - L_c^{t+1}.
\end{equation}

We substitute the functional definitions for both the next-period gas demand $L_c^{t+1}$ and the next-period yield $y^{t+1}$:
\begin{equation}
L_c^{t+1} = \frac{\gamma}{p^{t+1}(1+y^{t+1})}, \quad \text{where} \quad y^{t+1} = \frac{c}{\sqrt{S^{t+1}}}.
\end{equation}

Plugging these into \eqref{eq:next_S_evolution}, yields $S^{t+1}$:
\begin{equation}
\label{Total New Staking}
S^{t+1} = S^t(1+\frac{c}{\sqrt{S^t}}) - \frac{\gamma}{p^{t+1} \left( 1 + \frac{c}{\sqrt{S^{t+1}}} \right)}.
\end{equation}


\subsection{Dynamical System Formalism}
For the study of the dynamics of the Dual-Consumption Model described in the previous sections, it will be convenient to represent the state of the network at period $t$ by a vector of the staking level of the Consumer $(S^t_c),$  total staked ETH in the network $(S^{t})$ and price $p^t:$
\[
(S_c^t,S^t,p^t).
\]
The values of all other variables are functions of this state. 
The change of the state vector over one period is governed by the system consisting of equations \eqref{eq:price_clearing I}, \eqref{New Consumer Staking} and \eqref{Total New Staking}.
This system implicitly defines the next period values from the current values and exogenous inputs:
\begin{equation}
    \label{def:DS}
   (S^{t+1}_c, S^{t+1},p^{t+1}) = \mathcal{\bf F}(S^t_c, S^t, p^t,\Lambda^{t+1}, I_c^{t+1}),
   \end{equation}
   for some vector-valued function $\mathcal{\bf F}.$
The model is completed by the choice of the initial values for 
\[
(S_i^0, S^0,p^0).
\]


\subsection{Steady-State Equilibrium for the Consumer-only Economy}
\label{subsec:steady2}
We consider first the Consumer-only Economy ($S^t_i=0,\Lambda^t=0$). In this case $S^t = S^t_c,$ and 
 $W_c^t = \frac{I^t_c}{\nu}.$ The dynamics is described by the equations
\begin{align}
    \label{def:DS Pure Consumer 1}
    p^{t+1} & =  \frac{\xi I_c^{t+1}}{\nu S^t (1+y^t)}, \\
    \label{def:DS Pure Consumer 2}
    S^{t+1} & =  S^t(1+y^t) - \frac{\gamma}{p^{t+1} \left( 1 + y^{t+1} \right)}.
\end{align}
The  Consumer dynamics admits a unique steady state if we assume that the Consumer fiat inflow is constant: $I_c^t=I_c.$ In such a case, the steady-state values for the price, the consumer wealth, the staking amount, and ETH yield can be obtained from the above equations, see \cite{Perepelitsa2026_I}:
\begin{equation}
\label{def:eq values}
S^* = \left( \frac{c \xi I_c}{\gamma \nu} \right)^2,\quad p^* = \frac{\gamma^2 \nu}{c^2 (\xi I_c + \gamma \nu)}, \quad y^*  = \frac{\gamma \nu}{\xi I_c},\quad W_c^* = \frac{I_c}{\nu}.
\end{equation}

The steady-state is the intersection point of  the equilibrium for fiat in-flow and out-flow rates:
\begin{equation}
    \label{def: consumer wealth}
    I_c = \nu W_c^*,
\end{equation}
the equilibrium between newly minted tokens and the tokens burned:
 \[
 S^*y^* = \frac{\gamma}{p^*(1+y^*)},
 \]
and the price  at which tokens  demanded and tokens liquidated are equal: 
\[
\frac{\xi I_c}{p^*} = \nu S^*(1+y^*).
\]

We will call the set of all steady-states parametrized by the fiat inflow rate $I_c$ the steady-state manifold.

\begin{comment}[Network Expansion]
The formula for the equilibrium price of ETH can be used to obtain a crude quantitative estimate of the effect of the expansion of the network. If there is $N$ participants in the economy, then both $\gamma \sim N$ and $I_c \sim N$ and from the formula for $p^*$ in \eqref{def:eq values} we conclude that the equilibrium price will grow linearly in $N$ during such expansion.
\end{comment}

\begin{comment}
\label{com:1}
    Notice that the equilibrium price $p^*$ in \eqref{def:eq values} is inversely proportional to fiat inflow rate $I_c,$ which we can informally identify as the Consumer demand for tokens. Increased demand does lead to an increase in the token price, but only in the short term. 
    The price jump triggers the issuance of tokens above the burn level, which over time brings the system back to the equilibrium manifold, with the lower equilibrium price $p^*.$ 
\end{comment}

\noindent\textbf{Numerical Example}
We illustrate convergence to equilibrium after an increase in demand by the following numerical example. We consider the system at a steady-state initially with values reflecting realistic network conditions, as summarized in Table \ref{tab:baseline_parameters}.

\begin{table}[htbp]
    \centering
    \begin{tabular}{p{0.55\linewidth} l c}
        \toprule
        \textbf{Description} & \textbf{Parameter} & \textbf{Value} \\
        \midrule
        Monthly periodic yield (3\% annualized target) & $y^*$ & $0.0025$ \\
        Initial physical staked supply (tokens) & $S^0$ & $40 \times 10^6$ \\
        Baseline exogenous monthly fiat inflow (USD) & $I_c^0$ & $350 \times 10^6$ \\
        Fiat consumption propensity & $\nu$ & $0.01$ \\
        Crypto retention rate & $\xi$ & $0.99$ \\
        Gas utility parameter & $\gamma$ & $86.625 \times 10^6$ \\
        Network issuance parameter & $c$ & $15.81$ \\
        \bottomrule
    \end{tabular}
     \caption{Baseline calibration parameters for the steady-state equilibrium.}
     \label{tab:baseline_parameters}
\end{table}

At $t=1$, we introduce a macroeconomic shock where the fiat inflow demand permanently increases by $50\%$, jumping to $I_c^1 = 525 \times 10^6$ USD.

Following the shock, the system is bound to an asset-market clearing locus derived from equations \eqref{def:DS Pure Consumer 1} and \eqref{def:DS Pure Consumer 2}:
\begin{equation}
\label{def:supply curve}
p = \frac{1}{S} \left( \frac{\xi I^1_c}{\nu} - \frac{\gamma}{1+y} \right).
\end{equation}
The system then moves along this curve until it reaches a new steady state (see Figure \ref{fig:Attractor}).

The dynamics in this example goes through the following causal sequence.

\textit{The Price Jump:} When the demand, $I_c$, jumps, the market-clearing price $p$ must spike instantaneously because the physical staked supply $S$ cannot jump.

\textit{The Burn Collapse:} With a significantly higher nominal token price, the physical tokens required to satisfy the Consumer's constant fiat gas demand decrease sharply from the previous steady-state level.
    
\textit{The Supply Expansion:} Because the physical burn rate collapses while the protocol's native yield continues to issue tokens, the network suddenly mints more tokens than it burns, leading to expansion of the physical staked supply $S.$

\textit{The Monotone Slide:} The system then enters a slow, multi-decade monotone sliding mode
    along the short-run equilibrium schedule \eqref{def:supply curve}. As more tokens are minted and the physical supply expands, the nominal price incrementally decreases. This decreasing price causes the burned tokens to increase,  see \eqref{def:L_C}, until the burn rate finally catches up to perfectly balance the minting rate at the new steady-state equilibrium.

The stability and slow adjustment rate in this example are generic properties of the network. We address these properties in the next section.


\subsubsection{Stability of the Steady State and the Rate of  Relaxation}

In the case of the Consumer-only Economy, the dynamical system \eqref{def:DS} reduces to 1-dimensional system for the staking amount $S^{t}.$
The non-linear equation for the next period $S^{t+1}$ is given by
\begin{equation}
    \label{def:DS 1-d}
    S^{t+1} = S^t(1+y^t)\left( 1 - \frac{y^*}{1+ y^{t+1}}\right),\quad
    y^t = \frac{c}{\sqrt{S^t}},
\end{equation}
which implicitly determines $S^{t+1} = F(S^t).$

Our main result, which we prove in  Section \ref{sec:appendix}, is that the steady state is unique and is globally and asymptotically stable.
\begin{theorem}[Global Asymptotic Stability of the Consumer-only Economy]
\label{th:1}
Consider the deterministic discrete-time dynamical system governing the physical staked supply in a Consumer-only economy, defined implicitly by the transition map $S^{t+1} = F(S^t)$ from equation
\begin{equation}
    S^{t+1} = S^t(1+y(S^t))\left( 1 - \frac{y^*}{1+ y(S^{t+1})}\right)
\end{equation}
where $y(S) = \frac{c}{\sqrt{S}}$ and the steady-state baseline yield is $y^* = \frac{\nu \gamma}{\xi I_c}$. Assume standard macroeconomic conditions such that $0 < y^* < 1$. Then, the unique steady state $S^*$ is globally asymptotically stable. Specifically:
\begin{enumerate}
    \item[i.] For any arbitrary initial physical supply $S^0 > 0$, the state sequence $\{S^t\}_{t=0}^{\infty}$ converges monotonically to the steady state $S^*$.
    \item [ii.] Sequence $S^t$ converges to $S^*$ at the relaxation rate $\lambda,$
    \begin{equation}
    \label{def:Relaxation Rate}
       \lim_{t\to\infty}\frac{S^{t+1}-S^*}{S^t-S^*} = \lambda = \frac{1 + \frac{1}{2}y^*}{1 + y^* + \frac{1}{2}(y^*)^2}  \approx 1 - \frac{1}{2}y^*.
    \end{equation}
\end{enumerate}
\end{theorem}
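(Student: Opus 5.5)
The plan is to solve the implicit relation explicitly by isolating $S^{t+1}$ on one side, which turns $F$ into a composition of two increasing bijections of $(0,\infty)$. Everything then follows from one-dimensional monotone-map arguments. Throughout we write $u=\sqrt{S}$, so that $y(S)=c/u$.

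\textbf{Step 1: $F$ is well defined and increasing.} Multiplying the equation in Theorem \ref{th:1} by $(1+y(S^{t+1}))/(1+y(S^{t+1})-y^*)$ rewrites it as $g(S^t)=h(S^{t+1})$, where
\[
g(S)=S(1+y(S))=S+c\sqrt{S},\qquad h(S)=\frac{S\,(1+y(S))}{1+y(S)-y^*}=\frac{S(\sqrt{S}+c)}{(1-y^*)\sqrt{S}+c}.
\]
The assumption $0<y^*<1$ makes the denominator of $h$ positive. The ratio $(\sqrt{S}+c)/((1-y^*)\sqrt{S}+c)$ is nondecreasing in $S$ because $y^*>0$, so $h$ is a product of a strictly increasing and a nondecreasing positive function. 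Hence $h$ is a continuous strictly increasing bijection of $(0,\infty)$, with $h\to0$ as $S\to0$ and $h\to\infty$ as $S\to\infty$. The same holds for $g$. Therefore $F=h^{-1}\circ g$ is a well-defined, continuous, strictly increasing self-map of $(0,\infty)$, and it is smooth by the inverse function theorem.

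\textbf{Step 2: the fixed point is unique, and the map lies on the correct side of the diagonal.} Both $g$ and $h$ are increasing, so $F(S)>S$ is equivalent to $h^{-1}(g(S))>S$, i.e.\ to $g(S)>h(S)$. Dividing by $S(\sqrt{S}+c)/\sqrt{S}$, this reduces to $(1-y^*)\sqrt{S}+c>\sqrt{S}$, that is, $\sqrt{S}<c/y^*$. Consequently $F(S)>S$ for $S<S^*$, $F(S)<S$ for $S>S^*$, and the only fixed point is $\sqrt{S^*}=c/y^*$, which agrees with \eqref{def:eq values}.

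\textbf{Step 3: monotone convergence (part i).} Take $S^0<S^*$. Then $S^1=F(S^0)>S^0$ by Step 2. Since $F$ is increasing, $S^1=F(S^0)<F(S^*)=S^*$. By induction the sequence $S^t$ is increasing and bounded above by $S^*$, so it converges. By continuity of $F$ the limit is a fixed point, and by uniqueness it equals $S^*$. The case $S^0>S^*$ is symmetric, and $S^0=S^*$ is trivial.

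\textbf{Step 4: relaxation rate (part ii).} By the mean value theorem, $(S^{t+1}-S^*)/(S^t-S^*)=F'(\eta_t)$ for some $\eta_t$ between $S^t$ and $S^*$. Since $\eta_t\to S^*$ and $F'$ is continuous, the ratio tends to $F'(S^*)=g'(S^*)/h'(S^*)$.

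To compute this derivative we work in the variable $u$; the chain-rule factor $du/dS$ appears in both $g'$ and $h'$ and cancels in the ratio. In terms of $u$,
\[
g=u^2+cu,\qquad h=\frac{u^2(u+c)}{(1-y^*)u+c}.
\]
At $u^*$ we have $(1-y^*)u^*+c=u^*$, so $h(u^*)=g(u^*)$. Logarithmic differentiation gives
\[
\frac{h_u}{h}=\frac{2}{u}+\frac{1}{u+c}-\frac{1-y^*}{u^*}.
\]
Substituting $c=y^*u^*$, we obtain $g_u/g=(2+y^*)/\bigl(u^*(1+y^*)\bigr)$ and $h_u/h=\bigl((1+y^*)^2+1\bigr)/\bigl(u^*(1+y^*)\bigr)$. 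Taking the ratio,
\[
\lambda=\frac{2+y^*}{(1+y^*)^2+1}=\frac{1+\tfrac12 y^*}{1+y^*+\tfrac12 (y^*)^2},
\]
which lies in $(0,1)$ and equals $1-\tfrac12 y^*+O\bigl((y^*)^2\bigr)$.

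\textbf{Main obstacle.} The delicate point is Step 1: checking that the implicit equation has exactly one positive solution $S^{t+1}$, which requires the monotonicity of $h$ and uses $y^*<1$. Once $F$ is identified as $h^{-1}\circ g$, the remaining steps are routine.
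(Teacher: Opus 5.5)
Your proposal is correct and takes essentially the same route as the paper. Your $g$ and $h$ are the paper's $R$ and $H$, and the paper likewise writes $F=H^{-1}\circ R$, reads off the side of the diagonal from the sign of $R-H$, gets part i from monotone convergence, and takes $\lambda=F'(S^*)$ for part ii. You also supply details the paper leaves implicit, and they all check out: surjectivity of $h$ onto $(0,\infty)$ so that $F$ is defined everywhere; continuity of $F$ to identify the limit as the fixed point; the mean value theorem to show that the ratio actually converges to $F'(S^*)$; and the explicit logarithmic-derivative computation giving $\lambda=(2+y^*)/\bigl((1+y^*)^2+1\bigr)$.
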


\vspace{0.5cm}
\noindent\textbf{Numerical Example} To illustrate the macroeconomic inertia of the network in reality, we evaluate the convergence rate using the baseline parameters from Table \ref{tab:baseline_parameters}. According to Theorem \ref{th:1} the relaxation rate of the physical supply $S^t$ and the price $p^t$ given by 
\begin{equation*}
    \lambda \approx 1 - \frac{1}{2}y^*  = 0.99875.
\end{equation*}

This value of $\lambda$ is exceptionally close to $1.$ The macroeconomic half-life ($t_{1/2}$), which represents the time required for a network shock to dissipate by exactly 50\%, 
is
\begin{equation*}
    t_{1/2} = \frac{\ln(0.5)}{\ln(\lambda)}  \approx 554.17 \text{ months}.
\end{equation*}
This translates to an adjustment half-life of approximately 46 years. 

This calculation shows that while the Consumer-only economy is globally and asymptotically stable, it takes decades for the physical consensus layer to move the system to the steady-state manifold.

The steady-state is entirely determined by the exogenous fiat inflow, $I_c$. In reality, this inflow is highly dynamic, fluctuating in response to real-world business cycles, shifts in market sentiment, and technological innovation. These traditional economic drivers operate on relatively short timescales—business cycles typically unfold over a few years, while broader macroeconomic trends can alter fiat liquidity in a matter of months. Because these real-world fluctuations happen faster than the network's 46-year relaxation time, the physical layer is structurally too slow to ever catch up. The network is repeatedly hit by new shocks long before it can resolve the previous ones. Consequently, the system exists in a perpetual state of transition, constantly chasing an ever-changing target on the attractor. This dynamic closely mirrors the classic Asset Price Overshooting model introduced by Dornbusch \cite{Dornbusch1976}, and further formalized by Obstfeld and Rogoff \cite{ObstfeldRogoff1996}. Just as in those established frameworks, our network features a fast variable (the nominal price $p^t$) that instantaneously absorbs any change in $I_c$ (Equation \eqref{eq:price_clearing I}), alongside a slow variable (the physical supply $S^t$) that drags the system toward the new steady state at a crawling pace.

\begin{figure}[htbp]
    \centering
    \includegraphics[width=0.8\textwidth]{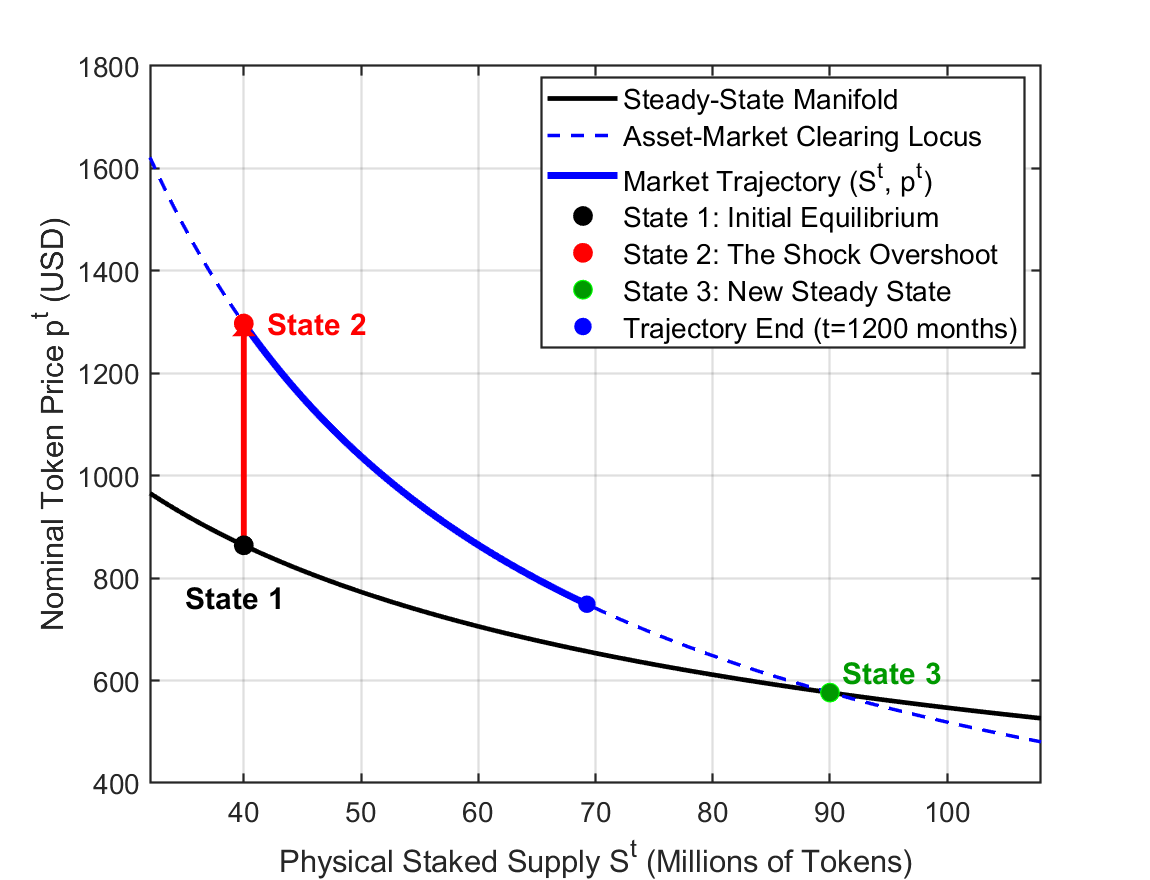}
    \caption{Simulated dynamics of the Consumer-only Market. At time $t=0$, the market is at equilibrium State 1. At $t=1$ the fiat inflow rate $I_c$ increases by 50\%. The fast variable $p^t$ immediately adjusts to the new level (State 2). The network proceeds slowly to the steady-state equilibrium State 3. The simulation stops after $t=1,200$ months, or 100 years.}
    \label{fig:Attractor}
\end{figure}

\section{Consumer/Investor Dynamics}
In this section, we will extend the Consumer-only Economy with an additional financial sector represented by the Investor who keeps capital in ETH staking. The financial goals of the Investor are outside of the scope of our model. Here we will focus on two simple scenarios. (A) Passive Accumulation: the Investor has a certain amount of staked ETH that remains in the network for a long period of time. (B) Buy/Sell Shocks: The Investor who has a substantial amount of staked ETH occasionally buys or sells large amounts of tokens.

In both scenarios, the question is how the network reacts to the addition of the Investor class. Specifically, we will be interested in the (de)centralization: the share of tokens owned by the Consumer and the Investor, and the nominal price dynamics.  

We start with scenario (A).
\begin{lemma}[Effects of Passive Investor Accumulation]
\label{lem:passive_accumulation}
Assume that the Consumer fiat inflow is constant,
\[
I_c^t=I_c>0,
\]
and that the Investor makes no additional purchases or liquidations after the
initial date,
\[
\Lambda^t=0,
\qquad t\geq 1.
\]
Let the equilibrium yield
$
y^*=\frac{\nu\gamma}{\xi I_c},\quad 
0<y^*<1.
$
Then the following statements hold.
\begin{enumerate}
    \item[\rm (i)]
    For any initial Consumer wealth \(W_c^0>0\), the Consumer's
    fiat-denominated wealth satisfies
    \[
    W_c^t=\frac{I_c}{\nu},
    \qquad t\geq 1.
    \]

    \item[\rm (ii)]
    Suppose that at \(t=1\) the state is interior and satisfies
    \[
    0<S_c^1<S^1,
    \qquad
    0<y^1<y^*,
    \qquad
    \frac{S_c^1}{S^1}<\frac{y^1}{y^*}.
    \]
Then, for every \(t\geq 1\),
    \[
    S^{t+1}>S^t,
    \qquad
    0<y^{t+1}<y^t<y^*,
    \]
    \[
    p^{t+1}>p^t,
    \qquad
    S_c^{t+1}<S_c^t,
    \qquad
    S_i^{t+1}>S_i^t.
    \]
    Consequently, the Consumer's share of the total staked-token stock
    decreases monotonically:
    \[
    \frac{S_c^{t+1}}{S^{t+1}}
    <
    \frac{S_c^t}{S^t},
    \qquad t\geq 1.
    \]

    The exact one-period price-inflation rate is
    \[
    \frac{p^{t+1}-p^t}{p^t}
    =
    \frac{y^*-y^t}{1-(y^*-y^t)}
    =
    y^*-y^t
    +
    \mathcal{O}\!\left((y^*-y^t)^2\right).
    \]
\end{enumerate}
\end{lemma}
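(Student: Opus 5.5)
Throughout I take \(\Lambda^t=0\) for \(t\ge 1\) and reduce everything to closed recursions in \((S_c^t,S^t)\). The argument then rests on one invariant inequality, \(y^*S_c^t<c\sqrt{S^t}\), propagated by induction.

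\textbf{Part (i) and the reduced recursions.} Part (i) is immediate from the fiat balance \eqref{eq:Fiat Balance}: with \(\Lambda^{t+1}=0\) it reads \(\nu W_c^{t+1}=I_c\) for every \(t\ge 0\).

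For part (ii), I substitute the clearing price \eqref{eq:price_clearing I}, \(p^{t+1}=\xi I_c/(\nu S_c^t(1+y^t))\), into \eqref{New Consumer Staking}. Using \(\xi I_c/p^{t+1}=\nu S_c^t(1+y^t)\) and \(\gamma/p^{t+1}=y^*S_c^t(1+y^t)\), I expect to obtain
\[
S_c^{t+1}=S_c^t(1+y^t)\Bigl(1-\frac{y^*}{1+y^{t+1}}\Bigr),\qquad
S^{t+1}=S^t(1+y^t)-\frac{y^*S_c^t(1+y^t)}{1+y^{t+1}},
\]
together with \(S_i^{t+1}=S_i^t(1+y^t)\) from \eqref{Investor New Stake}. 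The last identity already gives \(S_i^{t+1}>S_i^t\) whenever \(S_i^t>0\) and \(y^t>0\).

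\textbf{Price dynamics.} The first recursion can be rewritten as \(S_c^t(1+y^t)=S_c^{t-1}(1+y^{t-1})(1+y^t-y^*)\). Dividing consecutive instances of the price formula then gives, for \(t\ge 1\) (taking \(p^1\) itself to be given by the clearing equation from period \(0\)),
\[
\frac{p^{t+1}}{p^t}=\frac{1}{1-(y^*-y^t)} .
\]
This yields the exact inflation formula in the statement, and it yields \(p^{t+1}>p^t\) as soon as \(y^t<y^*\).

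\textbf{Main step: \(S^{t+1}>S^t\) via the implicit equation.} This is where I expect the difficulty, because \(S^{t+1}\) is defined only implicitly through \(y^{t+1}=c/\sqrt{S^{t+1}}\). I would introduce
\[
G(x)=x-S^t(1+y^t)+\frac{y^*S_c^t(1+y^t)}{1+c/\sqrt{x}},\qquad x>0,
\]
whose zero is \(S^{t+1}\).
\begin{itemize}
\item \(G\) is strictly increasing, and \(G(x)\to+\infty\) as \(x\to\infty\).
\item Evaluating at \(x=S^t\), so that \(y^{t+1}\) is replaced by \(y^t\), gives \(G(S^t)=y^*S_c^t-y^tS^t=y^*S_c^t-c\sqrt{S^t}\).
\end{itemize}
Hence, under the invariant \(y^*S_c^t<c\sqrt{S^t}\) (equivalently \(S_c^t/S^t<y^t/y^*\)), we have \(G(S^t)<0\). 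So the root \(S^{t+1}\) is unique and satisfies \(S^{t+1}>S^t\), which gives \(y^{t+1}<y^t\).

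\textbf{Monotonicity of Consumer staking and the induction.} Given \(y^{t+1}<y^t<y^*\), I show \(S_c^{t+1}<S_c^t\). Expanding \((1+y^t)(1+y^{t+1}-y^*)<1+y^{t+1}\) reduces it to
\[
y^t(1+y^{t+1})<y^*(1+y^t),
\]
which holds because \(y^t<y^*\) and \(y^{t+1}<y^t\). Positivity \(S_c^{t+1}>0\) follows from \(y^*<1\), and \(y^{t+1}>0\) is automatic.

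To propagate the invariant I use the chain
\[
y^*S_c^{t+1}<y^*S_c^t<c\sqrt{S^t}<c\sqrt{S^{t+1}} .
\]
This re-establishes all three hypotheses at \(t+1\): \(0<S_c^{t+1}<S^{t+1}\) (since \(S_c\) decreases and \(S\) increases), \(y^{t+1}<y^*\), and the share condition. The induction therefore closes from the assumptions at \(t=1\).

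Finally, the Consumer's share \(S_c^t/S^t\) strictly decreases, since its numerator decreases and its denominator increases.
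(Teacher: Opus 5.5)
Your proposal is correct and follows essentially the same proof as the paper. It uses the same invariant $y^*S_c^t<c\sqrt{S^t}$, propagated by the chain $y^*S_c^{t+1}<y^*S_c^t<c\sqrt{S^t}<c\sqrt{S^{t+1}}$, together with the same price ratio $p^{t+1}/p^t=1/(1+y^t-y^*)$ and the same key evaluation of the stake equation at $S^{t+1}=S^t$. The differences are only local: your strictly increasing $G$ replaces the paper's contradiction argument and also gives existence and uniqueness of the implicitly defined $S^{t+1}$, which the paper takes for granted. You also obtain $S_c^{t+1}<S_c^t$ from the closed recursion $S_c^{t+1}=S_c^t(1+y^t)\bigl(1-\frac{y^*}{1+y^{t+1}}\bigr)$, whereas the paper uses $S_c^t=\frac{\xi I_c}{\nu p^t}\bigl(1-\frac{y^*}{1+y^t}\bigr)$ together with $p^{t+1}>p^t$.
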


\begin{proof}
Under passive accumulation, \(\Lambda^{t+1}=0\). The fiat-balance
identity gives
\[
\nu W_c^{t+1}=I_c,
\]
and hence
\[
W_c^t=\frac{I_c}{\nu},
\qquad t\geq1.
\]
This proves part \({\rm (i)}\).

For \(t\geq1\), substituting \(W_c^t=I_c/\nu\) and
\[
p^tL_c^t=\frac{\gamma}{1+y^t}
\]
into the market-clearing price equation gives
\[
\frac{p^{t+1}}{p^t}
=
\frac{\xi I_c}
{\xi I_c(1+y^t)-\nu\gamma}
=
\frac{1}{1+y^t-y^*}
=
\frac{1}{1-(y^*-y^t)}.
\]
Therefore, whenever \(y^t<y^*\),
\[
p^{t+1}>p^t,
\]
and
\[
\frac{p^{t+1}-p^t}{p^t}
=
\frac{y^*-y^t}{1-(y^*-y^t)}.
\]

We next determine when total stake increases. Using
\[
p^{t+1}
=
\frac{\xi I_c}
{\nu S_c^t(1+y^t)}
\]
and \(y^*=\nu\gamma/(\xi I_c)\), the next-period burn is
\[
L_c^{t+1}
=
\frac{\gamma}{p^{t+1}(1+y^{t+1})}
=
\frac{y^*S_c^t(1+y^t)}{1+y^{t+1}}.
\]
The total-stake equation therefore becomes
\[
S^{t+1}
=
S^t(1+y^t)
-
\frac{y^*S_c^t(1+y^t)}{1+y^{t+1}}.
\]
We claim that  $
y^tS^t>y^*S_c^t$ implies $S^{t+1}>S^t.$
Since \(y(S)=c/\sqrt{S}\), this implies
\[
y^{t+1}<y^t.
\]
Suppose, to the contrary, that
\(S^{t+1}\leq S^t\). Since \(y(S)=c/\sqrt{S}\), this implies
\(y^{t+1}\geq y^t\). Therefore,
\[
\frac{y^*S_c^t(1+y^t)}{1+y^{t+1}}
\leq y^*S_c^t.
\]
Using the total-stake equation, we obtain
\[
S^{t+1}
\geq
S^t(1+y^t)-y^*S_c^t
=
S^t+\bigl(y^tS^t-y^*S_c^t\bigr)
>
S^t,
\]
which is a contradiction.

For \(t\geq1\), the Consumer's staked-token position can be written as
\[
S_c^t
=
\frac{1}{p^t}
\left(
\frac{\xi I_c}{\nu}
-
\frac{\gamma}{1+y^t}
\right)
=
\frac{\xi I_c}{\nu p^t}
\left(
1-\frac{y^*}{1+y^t}
\right).
\]
If \(y^{t+1}<y^t\), then
\[
1-\frac{y^*}{1+y^{t+1}}
<
1-\frac{y^*}{1+y^t}.
\]
Together with \(p^{t+1}>p^t\), this gives
\[
S_c^{t+1}<S_c^t.
\]

It remains to verify that the sufficient condition
\[
y^tS^t>y^*S_c^t
\]
is preserved. Since \(S^{t+1}>S^t\) and \(S_c^{t+1}<S_c^t\),
\[
y^{t+1}S^{t+1}
=
c\sqrt{S^{t+1}}
>
c\sqrt{S^t}
=
y^tS^t
>
y^*S_c^t
>
y^*S_c^{t+1}.
\]
Thus,
\[
y^{t+1}S^{t+1}>y^*S_c^{t+1},
\]
so the argument applies inductively for every \(t\geq1\).

Finally, because \(\Lambda^{t+1}=0\),
\[
S_i^{t+1}=S_i^t(1+y^t)>S_i^t.
\]
Since \(S_c^t\) decreases while \(S^t\) increases,
\[
\frac{S_c^{t+1}}{S^{t+1}}
<
\frac{S_c^t}{S^t}.
\]
This proves part \({\rm (ii)}\).
\end{proof}

The lemma shows that when the network yield is below the Consumer-only equilibrium level and the Consumer’s share
of total staking lies below the state-dependent threshold ($y^t/y^*$), the system enters a self-reinforcing regime: total stake rises, the staking yield declines, the nominal token price increases monotonically, and the Consumer’s share of staked tokens decreases monotonically. Passive Investor accumulation therefore moves the system progressively away from the Consumer-only steady-state manifold.

\subsection{Buy--Sell Shocks}

We consider now the Investor/Consumer Economy in the scenario (B).
For the presentation we opt for a numerical simulation that captures the main features of the market dynamics. In our model, the Investor is passively staked in ETH at the beginning of the simulation and then injects capital, $\Lambda^t>0,$ during a short period (Buy Shock), and later liquidates the same amount of capital, $\Lambda^t<0,$ during a period of the same length.

The network is parameterized according to the conditions in Table \ref{tab:initial_conditions}.

\begin{table}[htbp]
    \centering
    \begin{tabular}{llc}
        \toprule
        \textbf{Parameter} & \textbf{Symbol} & \textbf{Value} \\
        \midrule
        Fiat Inflow (Monthly) & $I_c$ & 350 Million USD \\
        Fiat Consumption Propensity & $\nu$ & 1\% \\
        Baseline Staking Yield (Monthly) & $y^*$ & 0.25\% \\
        Initial Staked Supply & $S^0$ & 40 Million ETH \\
        Initial Consumer Staked Supply & $S_c^0$ & 10 Million ETH \\
        Initial Investor Staked Supply & $S_i^0$ & 30 Million ETH \\
        Initial Token Price & $p^0$ & $\approx$ 3,456 USD \\
        Initial Consumer Wealth & $W_c^0$ & 35 Billion USD \\
        Gas Utility Parameter & $\gamma$ & $86.625 \times 10^6$ \\
        Network Issuance Parameter & $c$ & $15.81$ \\
        \bottomrule
    \end{tabular}
    \caption{Market Model Parameters for the Dual-Consumption simulation.}
    \label{tab:initial_conditions}
\end{table}

The simulation modeled a symmetric shock: the Investor injected $\Lambda^t=300$ million USD monthly from months $t=20,...,30,$ and subsequently extracted the exact same fiat amount, $\Lambda^t = -300$ million USD from months 40 to 50.

The results of the simulation are summarized in Figure \ref{fig:Buy/Sell}. The main characteristic features of the dynamics are the following.

\textit{Price Overshooting and Persistent Displacement:}
The nominal token price ($p^t$) rises to nearly 7,000 USD during the buy shock and falls to approximately 300 USD during the sell shock. Following the cycle, the simulated price recovers to approximately $2,100$ over the displayed post-shock horizon, remaining below its initial value during that interval; see the top-left panel of Figure \ref{fig:Buy/Sell}.

\textit{Physical Token Redistribution:}
The Consumer endogenously accumulates staked tokens, increasing $S_c$ from 10 million to approximately 17 million ETH. The Investor's staked-token holdings decline by the corresponding amount; see the bottom-left panel of Figure \ref{fig:Buy/Sell}.

\textit{Fiat-Denominated Wealth:}
The cumulative net Investor fiat flow over the complete cycle is zero: the Investor withdraws the same total amount of fiat that it previously injected. Nevertheless, the Investor ends the cycle with fewer tokens. The Consumer's total accumulated wealth, defined as the total fiat extracted minus total fiat injected, added to the Consumer's current token wealth,
\[
 \left( \sum_{k=1}^{t} (\nu W_c^k - I_c^k) \right) + W_c^t,
\]
returns to its pre-shock value because $\sum_{k=1}^{t} (\nu W_c^k - I_c^k) = \sum_{k=1}^t \Lambda^k =0,$
for $t>50.$

\subsubsection{The asymmetries in the Buy/Sell cycle.}

The market mechanics of the Buy/Sell cycle produce visible asymmetries in the dynamics of all market variables.

\textit{The Buy Shock:} When the Investor injects fiat into the token market, the price spikes instantly. Because they are purchasing tokens at the elevated token price ($p^t \approx 7,000$), their 300 million USD buys a relatively small volume of physical tokens.

The Consumer allocates the fixed share ($\nu W_c^t$) of its wealth to fiat consumption. As the price increase raises the fiat-denominated value of the Consumer's token holdings, its desired consumption rises. When ($\nu W_c^t>I_c^t$), the Consumer finances the difference by liquidating tokens, thereby supplying the tokens purchased by the Investor.

\textit{The Sell Shock:} During the sell shock, the Investor liquidates tokens, while the Consumer provides the corresponding buying demand required by market clearing.

The price crashes ($p^t \approx 300$). The fixed fiat withdrawal requires the Investor to liquidate a substantially larger volume of tokens than they originally purchased. 

At a lower token price, a given fiat surplus allows the Consumer to acquire a larger physical quantity of tokens. The Consumer algorithmically absorbs this large token liquidation at the depressed price, resulting in the persistent post-shock increase in $S_c$. 

In this cycle, the Consumer effectively executes a constant-value rebalancing strategy: algorithmically selling "High" and buying "Low."  The asymmetry of the Buy--Sell cycle follows from the market-clearing identity
\begin{equation}
\label{def:constant_value}
I_c^t-\nu W_c^t=-\Lambda^t,\qquad t>0.
\end{equation}
In the absence of Investor flows, $\Lambda^t=0,$ this identity yields $W_c^t/I_c^t=1/\nu.$ We therefore refer to $1/\nu$ as the Consumer's target Constant-Value Ratio. Investor flows temporarily displace the Consumer from this target and induce the compensating token sales and purchases described below.

The left-hand side of \eqref{def:constant_value} represents the Consumer's net fiat contribution to the token market: its exogenous fiat inflow ($I_c^t$) less its fiat consumption ($\nu W_c^t$). Thus, the Consumer's net order flow exactly offsets the exogenous Investor flow.

During the buy shock, ($\Lambda^t>0$), and hence ($\nu W_c^t>I_c^t$). The Consumer therefore liquidates tokens to finance consumption in excess of its current fiat inflow, providing the token supply purchased by the Investor as the price rises. During the sell shock, ($\Lambda^t<0$), and hence ($\nu W_c^t<I_c^t$). The Consumer then retains a net fiat surplus and purchases tokens liquidated by the Investor.

Because a fixed fiat amount purchases fewer tokens at the elevated buy-shock price and more tokens at the depressed sell-shock price, a symmetric fiat Buy--Sell cycle generates an asymmetric transfer of physical tokens from the Investor to the Consumer. The Consumer's allocation rule therefore produces an endogenous contrarian rebalancing mechanism: it sells tokens during the price increase and purchases tokens during the subsequent decline.

The simulation also exhibits persistent finite-horizon displacement in the nominal token price. Following the Buy–Sell cycle, the price remains below its initial value over the displayed post-shock period. This displacement is associated with a substantial redistribution of staked tokens between the two classes: the Consumer holds a larger post-shock token balance, while its fiat-denominated transactional demand remains unchanged.

The token-denominated burn rate rises sharply during the sell shock, see Figure \ref{fig:Buy/Sell}, bottom right display. In the model, the lower token price increases the quantity of ETH associated with a given level of fiat-denominated transactional demand; the burn spike therefore need not represent a comparable increase in real network activity.

\subsubsection{Financial and consensus-ownership implications}

The symmetric Buy–Sell cycle leaves the Consumer’s accumulated fiat-denominated wealth unchanged while increasing its physical token holdings. The cycle is therefore wealth-neutral and token-accretive for the Consumer under the stated accounting measure.

From the perspective of consensus ownership, the cycle shifts staked tokens from the non-transactional Investor to the active utility-driven Consumer. This redistribution may have favorable implications for the alignment of validating power with network use. Establishing a corresponding improvement in decentralization or consensus security, however, would require an explicit measure of validator concentration or attack cost.

\begin{figure}[htbp]
    \centering
    \includegraphics[width=\textwidth]{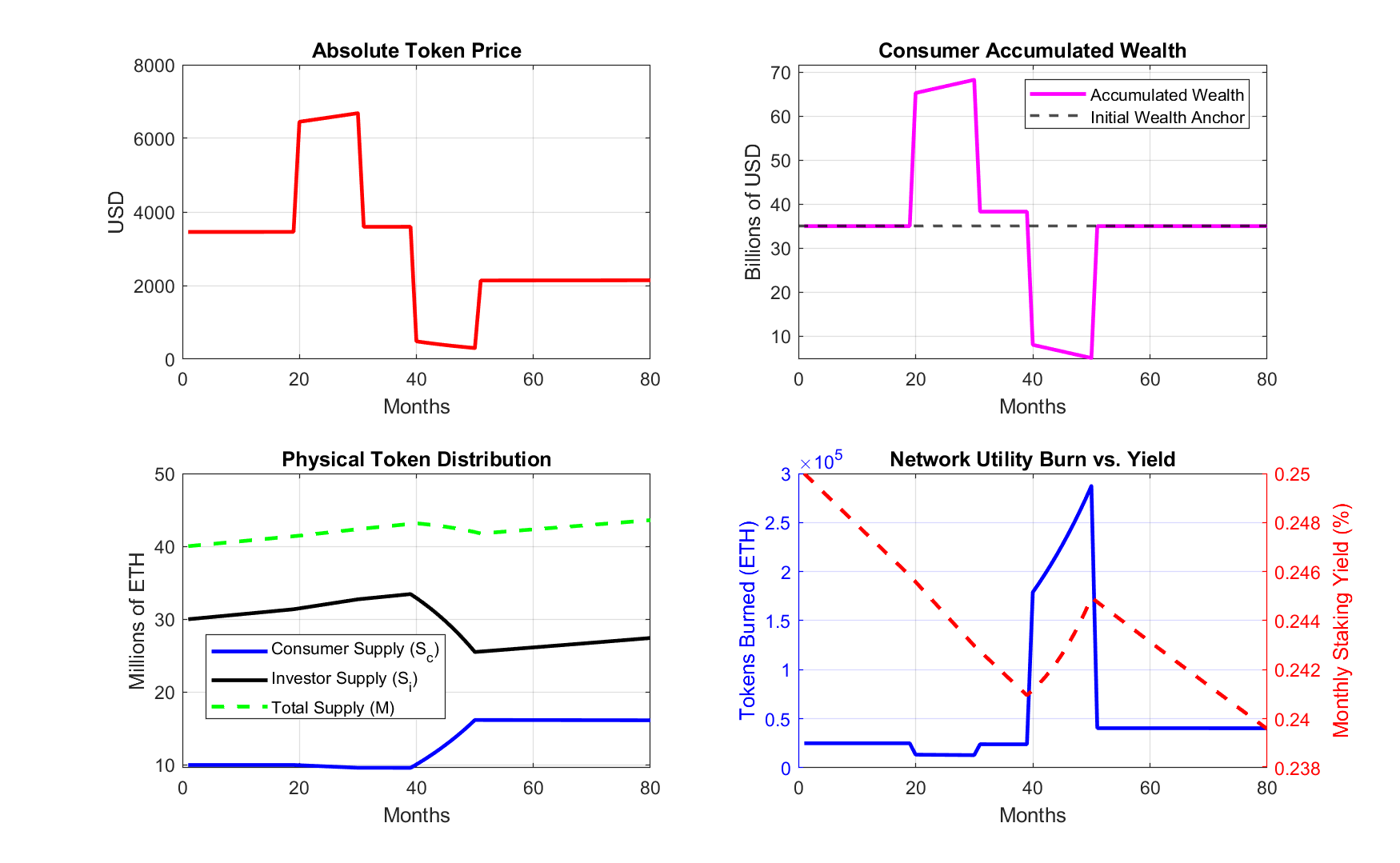}
    \caption{Simulated dynamics of a Buy-Sell cycle. The Investor is passively staking tokens in the first 20 periods. At periods $t=20,...,30$ the Investor invests 300 million USD per period into staking. At periods $t=40,...,50$ the Investor liquidates the same amount, 300 million USD per period.
    }
    \label{fig:Buy/Sell}
\end{figure}


\section{Conclusions and Limitations}
\label{sec:conclusion}

This paper develops a dynamical open-economy model of a Proof-of-Stake network with utility-driven Consumers and non-transactional Investors. In the Consumer-only economy, we establish the existence of a unique globally asymptotically stable steady state that provides a well-defined nominal price anchor. The calibrated local relaxation half-life is approximately 46 years, indicating that the physical token stock adjusts much more slowly than the nominal price. Consequently, the observed market price may remain persistently displaced from an equilibrium benchmark that itself changes with network fundamentals.

The introduction of Investor capital produces two distinct effects. Under the sufficient conditions identified in Lemma~\ref{lem:passive_accumulation}, passive restaking suppresses the native yield, raises the nominal token price, and gradually shifts the staked-token share away from active Consumers. By contrast, symmetric Buy--Sell shocks generate an endogenous contrarian response. The Consumer sells tokens during speculative inflows and repurchases them during subsequent outflows. Under frictionless market clearing, this Shannon's-demon-like mechanism leaves the Consumer's accumulated fiat-denominated wealth unchanged while increasing its physical token holdings. Speculative volatility is therefore harvested in token units rather than in fiat.

The model is intentionally stylized. It represents Consumers and Investors as aggregate classes, assumes myopic price expectations, and abstracts from transaction costs, slippage, taxes, staking delays, leverage, heterogeneous beliefs, and strategic Investor behavior. The Buy--Sell experiment uses deterministic symmetric fiat shocks; asymmetric or stochastic capital flows need not produce the same redistribution. The calibration is illustrative rather than a formal empirical estimation, and the model's active token stock excludes passive unstaked balances. Finally, a transfer of stake toward the Consumer class does not by itself establish greater validator decentralization or consensus security. Evaluating those effects would require explicit measures of ownership concentration, delegation structure, and attack cost. Another natural extension is to characterize the volatility-harvesting mechanism analytically for general Buy--Sell cycles and to determine its robustness to asymmetric capital flows, transaction costs, and stochastic price dynamics.

\section{Appendix}
\label{sec:appendix}

\subsection{Proof of Theorem 1}

\begin{proof}
To prove part i., let $S^*$ be the steady-state of \eqref{def:DS 1-d} and $1-y^*>0.$ Given $S^0>0$ we will show that $S^t\to S^*,$ as $t\to\infty.$
We restructure the implicit recurrence relation into the form 
\[
H(S^{t+1}) = R(S^t),
\]
where the auxiliary functions are defined as:
\begin{equation}
    R(x) = x(1+y(x)), \quad \text{and} \quad H(x) = \frac{x(1+y(x))}{1+y(x)-y^*},
\end{equation}
where the yield curve $y(x) = cx^{-1/2}.$ 
Direct computation shows that both $R(\cdot)$ and $H(\cdot)$ are strictly increasing functions.

Since $H(x)$ is strictly increasing, its continuous inverse $H^{-1}(x)$ exists, allowing us to explicitly define the transition map as $F(x) \equiv H^{-1}(R(x))$. As the composition of two strictly increasing functions, $F(x)$ is strictly increasing.

To determine the direction of the transition map relative to the line $y=x$, we evaluate the sign of $R(x) - H(x)$:
\begin{equation}
    R(x) - H(x) = x(1+y(x)) \left( \frac{y(x) - y^*}{1 + y(x) - y^*} \right)
\end{equation}
Because the native yield $y(x)$ is strictly monotonically decreasing,
\begin{itemize}
    \item for $x < S^*$, $y(x) > y^*$. The difference $(y(x) - y^*)$ is strictly positive, ensuring $R(x) > H(x)$, which implies $F(x) > x;$
    \item for $x > S^*$, $y(x) < y^*$. The difference $(y(x) - y^*)$ is strictly negative, ensuring $R(x) < H(x)$, which implies $F(x) < x$.
\end{itemize}
Since $F(x)$ is strictly increasing, sequences generated by $S^{t+1} = F(S^t)$ are monotonic. If $S^0 < S^*$, the sequence is strictly increasing and bounded above by $S^*$. If $S^0 > S^*$, it is strictly decreasing and bounded below by $S^*$. By the Monotone Convergence Theorem, for any $S^0 > 0$, $\lim_{t \to \infty} S^t = S^*$.

The linearized relaxation rate $\lambda$ in part ii. is defined as $F'(S^*).$ The value of $F'(S^*)$ is obtained by differentiating equation \eqref{def:DS 1-d} and evaluating it at the steady state $S^*:$
\[
F'(S^*) = \frac{1 + \frac{1}{2}y^*}{1 + y^* + \frac{1}{2}(y^*)^2} = 1 - \frac{1}{2}y^* + \mathcal{O}((y^*)^2).
\]
\end{proof}

\bibliographystyle{plain} 
\bibliography{refs}

\end{document}